\documentclass[conference]{IEEEtran}
\IEEEoverridecommandlockouts
\usepackage[noadjust]{cite}

\usepackage{amsmath,amssymb,amsfonts}
\usepackage{algorithmic}
\usepackage{graphicx}
\usepackage{textcomp}
\usepackage{xcolor}
\usepackage{amsthm}
\usepackage{bbm}

\newtheorem{theorem}{Theorem}[section]

\newtheorem{lemma}[theorem]{Lemma}

\begin{document}

\title{Sequential Change Detection through Empirical Distribution and Universal Codes}

\author{
Vikrant Malik \\
Deptt. of Electrical Engineering \\
Indian Institute of Technology Kanpur, India \\
vikrant@iitk.ac.in
\and
R. K. Bansal \\
Deptt. of Electrical Engineering \\
Indian Institute of Technology Kanpur, India \\
rkb@iitk.ac.in
}

\maketitle

\begin{abstract}
Universal compression algorithms have been studied in the past for sequential change detection, where they have been used to estimate the post-change distribution in the modified version of the Cumulative Sum (CUSUM) Test. In this paper, we introduce a modified CUSUM test where the pre-change distribution is also unknown and an empirical version of the pre-change distribution is used to implement the algorithm. We present a study of various characteristics of this modified CUSUM Test and then prove its asymptotic optimality.
\end{abstract}

\section{Introduction}
Change detection has widespread applications especially in the medical and engineering fields and it has been widely studied. In this aspect, Page's Cumulative Sum (CUSUM) Test has been a well known tool to approach to the problem of change detection in non-bayesian framework. The performance of Page's CUSUM test has also been extensively studied throughout the years and it's asymptotic properties have been established in \cite{lai}, \cite{lorden} and \cite{bansal86}.

However, Page's CUSUM test requires the knowledge of both pre-change and post-change distributions. Having knowledge of both of these distributions might not always be the case in real-world scenario. For example, in the problem of fault detection, the post-change distribution is likely to be unknown. To address this challenge, ideas from the field of information theory were used by Jacob and Bansal in \cite{jacob2008}. The authors introduced a modified CUSUM test where they made use of a universal code in place of the log-likelihood of the post-change distribution. The authors also proved the optimality of this test in the asymptotic regime and showed that this new test was asymptotically equivalent to the original Page's CUSUM test.

This work was studied further in \cite{srivastava2018sequential} where the authors studied the asymptotic performance of a modified version of the JB-Page Test under Lai's criterion for probability of false alarm under a window. In \cite{chittam}, the authors used the concept of sequential change detection through universal codes for universal compression of a piece-wise stationary source. Further, in \cite{verma2019sequential}, the authors extended the work of \cite{jacob2008} and studied the asymptotic performance of the proposed test for markov sources. 

In this paper, we introduce a further modification of this test which addresses the unknown character of the pre-change distribution without affecting the estimate of the post-change distribution. We then prove the asymptotic optimality of this test.

\section{Preliminaries}
A random process given by $X=\left\{X_{n}\right\}_{1}^{\infty}$ takes values in a finite set called the source alphabet $\mathcal{X}$. We denote the sequence of random variables $X_{i}, X_{i+1}, \ldots, X_{j}$ by $X_{i}^{j}$ whereas a sequence $x_{i}, x_{i+1}, \ldots, x_{j}$ of source symbols is denoted by $x_{i}^{j}$. We also denote the probability distribution and the $n$-dimensional marginal distribution of the random process $X=\left\{X_{n}\right\}_{n=1}^{\infty}$ by $\mu$. Thus, $\operatorname{Pr}\left\{X_{1}^{n}=x_{1}^{n}\right\}=\mu\left(x_{1}^{n}\right)$

A fixed to variable length code ( FV code ) is a one to one mapping $\varphi_{n}: \mathcal{X}^{n} \rightarrow\{0,1\}^{*}$ for a given $n$. Here $\{0,1\}^{*}$ is the set of all binary strings of finite length. For such a code, the length function $L\left(x_{1}^{n}\right)=\left|\varphi_{n}\left(x_{1}^{n}\right)\right|$ denotes the length of the codeword generated by this code for a given sequence $x_{1}^{n}$.

A prefix-free set is a collection of strings such that no string in the set is a prefix of another string in the same set. If $\varphi_{n}\left(\mathcal{X}^{n}\right)$ is a prefix-free set (or simply, a prefix set), then by Kraft's inequality, we have,

\begin{equation}
\sum_{x_{1}^{n} \in \mathcal{X}^{n}} 2^{-L\left(x_{1}^{n}\right)} \leq 1.
\label{eq:kraft}
\end{equation}

The entropy rate of a random process $X$ with distribution $\mu$ is given by 
\begin{equation}
H(\mu)=-\lim _{n \rightarrow \infty} \frac{1}{n} \sum_{x_{1}^{n} \in \mathcal{X}^{n}} \mu\left(x_{1}^{n}\right) \log \mu\left(x_{1}^{n}\right).
\end{equation}

Moreover, the Kullback-Leiber Divergence between two stationary processes $\mu$ and $\nu$ is given by,

\begin{equation}
D(\mu \| \nu)=\lim _{n \rightarrow \infty} \frac{1}{n} \sum_{x_{1}^{n} \in \mathcal{X}^{n}} \mu\left(x_{1}^{n}\right) \log \frac{\mu\left(x_{1}^{n}\right)}{\nu\left(x_{1}^{n}\right)}.
\end{equation}

In our analysis, we assume memory-less sources. In that case, we have,

\begin{equation}
H(\mu)= -\sum_{x \in \mathcal{X}} \mu\left(x\right) \log \mu\left(x\right)
\end{equation}

and,
\begin{equation}
D(\mu \| \nu)= \sum_{x \in \mathcal{X}} \mu\left(x\right) \log \frac{\mu\left(x\right)}{\nu\left(x\right)}.
\end{equation}

For a stationary process, there is a well known lower bound for the compression ratio of any lossless compression algorithm. This bound is nothing but the entropy rate of the source \cite{cover}. Optimal Codes achieve this lower bound. Universal Codes constitute a prefix class which is asymptotically optimal. For such codes, and for a class of memory-less sources $\mathcal{M}$, a FV code with length function $L\left(x_{1}^{n}\right)$ is said to be (strongly) universal if,
\begin{equation}
R_{L}^{n}=\sup _{\mu \in \mathcal{M}} \max _{x_{1}^{n} \in \mathcal{X}^{n}}\left(L\left(x_{1}^{n}\right)+\log \mu\left(x_{1}^{n}\right)\right)=o(n).
\end{equation}

\subsection{Previous Work}
In our test setting, the pre-change distribution is given by $\mu_0$ and the post change distribution is given by $\mu_1$ and the samples generated by both are i.i.d. For the case when both the pre change and post change distributions are known, Page proposed a test \cite{page1954} which is given as follows,

\textbf{Page's CUSUM Test}:
Test 1: Starting with $n = 1$, for samples $x_1, x_2, \ldots, x_n$, and $\gamma > 1$, if
\begin{equation}
\max _{k \leq n}\left(\log \mu_{1}\left(x_{k}^{n}\right)-\log \mu_{0}\left(x_{k}^{n}\right)\right) \geq \log \gamma,
\end{equation}
stop the test and decide that a change has occurred. Else, sample for $x_{n+1}$ and continue the test.

\textbf{Jacob Bansal Page CUSUM Test}:
Since Page's CUSUM test requires the knowledge of both pre-change and post change distributions, a modification of this test was proposed in \cite{jacob2008} for cases where the post-change distribution $\mu_1$ is unknown. The modified CUSUM Test, where a universal code is used to estimate the post-change distribution is given as follows,

Test 2: Starting with $n = 1$, for samples $x_1, x_2, \ldots, x_n$, $\gamma > 1$ and $\lambda < D\left(\mu_{1} \| \mu_{0}\right)$, if

\begin{equation}
\max _{k \leq n}\left(-L\left(x_{k}^{n}\right)-\log \mu_{0}\left(x_{k}^{n}\right)  - n \lambda \right) \geq \log \gamma,
\end{equation}

stop the test and decide that a change has occurred. Else, sample for $x_{n+1}$ and continue the test.

An auxiliary stopping time for this test is defined as,
\begin{equation*}
N_0(\alpha)=\inf \left\{n:-L\left(x_{1}^{n}\right)-\log \mu_{0}\left(x_{1}^{n}\right)-n \lambda \geq-\log \alpha\right\}
\end{equation*}

An error occurs when Test 2 stops even though no change in distribution has occurred. Using the stopping time $N_0(\cdot)$, we can calculate the probability of error for various tests. For the JB-Page Test in \cite{jacob2008}, the following results were shown.

\begin{lemma}
The probability of error, for Test 2 is given as,
\begin{equation}
\mu_{0}(N_0(\alpha)<\infty) \leq \alpha /\left(2^{\lambda}-1\right).
\end{equation}
\label{lemma:pe}
\end{lemma}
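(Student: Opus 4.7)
The plan is to translate the stopping condition into an inequality bounding $\mu_0(x_1^n)$, then use a union bound over $n$ together with Kraft's inequality to control the probability for each fixed $n$, and finally sum a geometric series.

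First I would rewrite the event inside the stopping time. The defining inequality
$-L(x_1^n)-\log\mu_0(x_1^n)-n\lambda \geq -\log\alpha$
is equivalent to
$\mu_0(x_1^n) \leq \alpha\, 2^{-L(x_1^n)}\, 2^{-n\lambda}.$
So, setting $A_n = \{x_1^n \in \mathcal{X}^n : \mu_0(x_1^n) \leq \alpha\, 2^{-L(x_1^n)}\, 2^{-n\lambda}\}$, the event $\{N_0(\alpha) = n\}$ is contained in $\{X_1^n \in A_n\}$, and thus $\{N_0(\alpha) < \infty\} \subseteq \bigcup_{n\geq 1}\{X_1^n \in A_n\}$.

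Next I would estimate $\mu_0(X_1^n \in A_n)$ for each fixed $n$ by summing the bound over $A_n$:
\begin{equation*}
\mu_0(X_1^n \in A_n) = \sum_{x_1^n \in A_n} \mu_0(x_1^n) \leq \alpha\, 2^{-n\lambda} \sum_{x_1^n \in A_n} 2^{-L(x_1^n)} \leq \alpha\, 2^{-n\lambda},
\end{equation*}
where the last step uses Kraft's inequality \eqref{eq:kraft} applied to the prefix-free codeword lengths $L(\cdot)$ on $\mathcal{X}^n$, extending the sum to all of $\mathcal{X}^n$.

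Finally I would apply a union bound across $n$ and sum the resulting geometric series:
\begin{equation*}
\mu_0\bigl(N_0(\alpha)<\infty\bigr) \leq \sum_{n=1}^{\infty} \alpha\, 2^{-n\lambda} = \frac{\alpha\, 2^{-\lambda}}{1-2^{-\lambda}} = \frac{\alpha}{2^{\lambda}-1},
\end{equation*}
which is the claimed bound. There is no real obstacle here; the only subtle point is to make sure the union bound is applied to the disjoint events $\{N_0(\alpha)=n\}$ (equivalently, that each such event is contained in $\{X_1^n \in A_n\}$) so that Kraft's inequality can be invoked separately for each block length $n$.
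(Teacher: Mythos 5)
Your proof is correct and follows essentially the same route the paper takes: the paper imports this lemma from the Jacob--Bansal reference, but its own proof of the analogous error bound for Test 3 uses exactly your argument (rewrite the threshold crossing as $\mu_0(x_1^n)\leq \alpha\,2^{-L(x_1^n)}2^{-n\lambda}$, apply Kraft's inequality for each fixed $n$, and sum the geometric series, which requires $\lambda>0$ as the paper notes). The only cosmetic difference is that the paper sums over the disjoint ``first-crossing at time $n$'' sets rather than your larger sets $A_n$; both are valid since Kraft's inequality holds for any subset of $\mathcal{X}^n$.
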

Also, the stopping time for Test 2 is given as $\left(\text{here, we use } \alpha = \frac{1}{\gamma}\right)$,
\begin{equation*}
M_0(\gamma)=\inf \{n: \max_{k \leq n}(-L\left(X_{k}^{n}\right)-\log \mu_0\left(X_{k}^{n}\right)-n \lambda) \geq \log \gamma\}.
\end{equation*}

\begin{lemma}
For Test 2 with stopping time $M_0(\gamma)$, we have,
\begin{equation}
\mathbb{E}_{0}(M_0(\gamma)) \geq \gamma\left(2^{\lambda}-1\right),
\end{equation}

and as $\gamma \rightarrow \infty$, we have,
\begin{equation}
\bar{\mathbb{E}}_{1}(M_0(\gamma)) \sim|\log \gamma| /\left(D\left(\mu_{1} \| \mu_{0}\right)-\lambda\right).
\end{equation}
\end{lemma}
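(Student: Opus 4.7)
The lemma has two parts: the non-asymptotic false-alarm bound $\mathbb{E}_{0}(M_{0}(\gamma)) \geq \gamma(2^{\lambda}-1)$, and the asymptotic equivalence $\bar{\mathbb{E}}_{1}(M_{0}(\gamma)) \sim |\log \gamma|/(D(\mu_{1}\|\mu_{0})-\lambda)$. For the first, the plan is to use a renewal-cycle decomposition of the max-statistic $W_{n} := \max_{k \leq n}[-L(X_{k}^{n}) - \log \mu_{0}(X_{k}^{n}) - n\lambda]$ under $\mu_{0}$. Since $\mu_{0}$ is i.i.d., $W_{n}$ regenerates whenever the current $\arg\max$ is at $k = n$, and between successive regenerations the cycle trajectories are i.i.d.\ copies of the auxiliary process of Lemma~\ref{lemma:pe}. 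Within one cycle, the probability of ever crossing $\log \gamma$ is thus upper-bounded by $\mu_{0}(N_{0}(1/\gamma) < \infty) \leq 1/[\gamma(2^{\lambda}-1)]$ by Lemma~\ref{lemma:pe} with $\alpha = 1/\gamma$. Consequently, the number of cycles before a false alarm stochastically dominates a geometric random variable with mean $\gamma(2^{\lambda}-1)$, and since each cycle has length at least one, the bound follows.

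For the asymptotic upper bound on $\bar{\mathbb{E}}_{1}(M_{0}(\gamma))$, Lorden's worst-case change-point is $\nu = 1$, so $X_{1}, X_{2}, \ldots$ are i.i.d.\ $\mu_{1}$, and $M_{0}(\gamma) \leq \tilde{N}(\gamma) := \inf\{n : -L(X_{1}^{n}) - \log \mu_{0}(X_{1}^{n}) - n\lambda \geq \log \gamma\}$ by restricting the maximum to $k = 1$. Strong universality yields $L(X_{1}^{n}) \leq -\log \mu_{1}(X_{1}^{n}) + R_{L}^{n}$ with $R_{L}^{n} = o(n)$, while the SLLN under $\mu_{1}$ gives $(1/n) \log[\mu_{1}(X_{1}^{n})/\mu_{0}(X_{1}^{n})] \to D(\mu_{1}\|\mu_{0})$ almost surely. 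Combining these, $(1/n)(-L(X_{1}^{n}) - \log \mu_{0}(X_{1}^{n}) - n\lambda) \to D(\mu_{1}\|\mu_{0}) - \lambda > 0$ a.s., so $\tilde{N}(\gamma)/|\log \gamma| \to 1/(D(\mu_{1}\|\mu_{0}) - \lambda)$ almost surely; a Cram\'er--Chernoff-type exponential tail bound on $\tilde{N}(\gamma)$ supplies the uniform integrability required to upgrade this to $L^{1}$ convergence.

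For the matching lower bound, Kraft's inequality applied to the universal code gives the Shannon bound $\mathbb{E}_{\mu_{1}}[L(X_{1}^{n})] \geq n H(\mu_{1})$, so the per-step expected increment of the statistic under $\mu_{1}$ is at most $D(\mu_{1}\|\mu_{0}) - \lambda$; a Wald-style optional stopping argument, together with control of the (negligible) overshoot at the boundary $\log \gamma$, then yields $\mathbb{E}_{\mu_{1}}(M_{0}(\gamma)) \geq |\log \gamma|/(D(\mu_{1}\|\mu_{0}) - \lambda) - o(|\log \gamma|)$. The principal obstacle I expect is making the cycle decomposition of the first paragraph fully rigorous: because $L(X_{k}^{n})$ is not additive across time, the ``cycles'' are only approximately i.i.d., and one must carefully marry the stationarity of $\mu_{0}$ with the Kraft subprobability $\sum_{x_{k}^{n}} 2^{-L(x_{k}^{n})} \leq 1$ to lift the single-start estimate of Lemma~\ref{lemma:pe} into the required tail bound for the maximum over all starting indices $k$.
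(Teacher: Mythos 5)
First, a framing point: the paper does not actually prove this lemma---it is quoted from \cite{jacob2008} as prior work---but the intended argument is the one the paper makes explicit for the analogous Theorem~\ref{th:43}: apply Lorden's theorem (Theorem~\ref{theorem:lorden}) to the one-sided rule $N_0(1/\gamma)$ together with the error bound of Lemma~\ref{lemma:pe}. Your proposal is missing exactly this device, and the omission lands on the step you yourself flag as the ``principal obstacle.'' The renewal-cycle decomposition of $W_n$ under $\mu_0$ does not exist here: because $L(x_k^n)$ is a universal code length rather than an additive log-likelihood, $W_n$ does not obey the CUSUM recursion $W_n=\max(W_{n-1},0)+\xi_n$, there is no regeneration time at which the future of the statistic becomes an independent copy of the past, and the ``cycles'' are not i.i.d.\ in any controllable sense, so the geometric-domination step has no foundation. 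Lorden's theorem is precisely the tool that removes the obstacle: writing $M_0(\gamma)=\min_k\left(N_{0,k}+k-1\right)$ with $N_{0,k}$ the rule $N_0(1/\gamma)$ applied to $X_k,X_{k+1},\ldots$, and using $\mu_0\left(N_0(1/\gamma)<\infty\right)\leq 1/\left(\gamma(2^{\lambda}-1)\right)$ from Lemma~\ref{lemma:pe}, it gives $\mathbb{E}_0(M_0(\gamma))\geq\gamma(2^{\lambda}-1)$ with no additivity, stationarity, or renewal structure required.

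The same theorem repairs the weak point of your second paragraph. You assert that the worst case in Lorden's criterion is the change point $\nu=1$ and then bound $M_0(\gamma)$ by the single-start rule; but $\bar{\mathbb{E}}_1$ takes a supremum over all change points $m$ \emph{and} an essential supremum over pre-change trajectories, and the reduction to $m=1$ is unjustified as stated. The correct route is the second conclusion of Theorem~\ref{theorem:lorden}, $\bar{\mathbb{E}}_1(N^{*})\leq\mathbb{E}_1(N)$, which holds uniformly over $m$ and reduces the problem to showing $\mathbb{E}_1(N_0(\alpha))\sim|\log\alpha|/\left(D(\mu_1\|\mu_0)-\lambda\right)$ (Theorem~2 of \cite{jacob2008}, the analogue of Theorem~\ref{th:34}); for that single-start analysis your ingredients---strong universality, the SLLN under $\mu_1$, and uniform integrability from an exponential tail bound---are the right ones. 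For the matching lower bound, note also that a Wald-type argument on the expected increment of a single start does not by itself control $M_0(\gamma)=\min_k(N_{0,k}+k-1)$, which is a minimum over all starting indices; one either needs a maximal inequality over $k$ or, more simply, the observation that $\bar{\mathbb{E}}_1(M_0(\gamma))\geq\mathbb{E}_1(M_0(\gamma))$ combined with the known lower-bound machinery of \cite{lorden}.
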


\section{Modified CUSUM Test}

\subsection{Estimation of Pre-Change Distribution}
For a given stream of source symbols, we are given that there is no change upto the $n_0$ symbols. Our tests now consists of two stages. First, we estimate the pre-change distribution using these $n_0$ symbols. Although the symbols are from a single stream, we denote the first $n_0$ symbols by $x_1, x_2 \cdots x_{n_0}$ and the subsequent symbols by $x_{n_0 + 1}, x_{n_0 + 2}, \cdots x_{n_0 + n}$. Using these samples, an estimate for $\mu_0$, given by $\hat{\mu}_0$ can be calculated as follows,

\begin{equation}
    \hat{\mu}_0(a) = \frac{\text{N}(a \mid x_1^{n_0})}{n_0} \; \forall a \in \mathcal{X}.
\end{equation}

Here, $\text{N}(a \mid x_1^{n_0})$ represents the number of times the symbol $a$ appears in the sequence $x_1^{n_0}$. Therefore, for any symbol $a \in \mathcal{X}$, the estimate for $\mu_0(a)$ is given as the fraction of times $a$ appears in the sequence $x_1^{n_0}$.

Here, we also assume that $n_0$ is large enough such that each symbol $a \in \mathcal{X}$ which has non-zero probability under $\mu_0$ i.e., $\mu_0(a) \neq 0$ has appeared at least once in $x_1^{n_0}$. If this is not the case, then the empirical distribution $\hat{\mu}_0$ is not an accurate representation of $\mu_0$.

Since the process is i.i.d., by Weak Law of Large Numbers, we can say that there exists an $n_0$ such that
\begin{equation}
    \mu_0(\left\{x_1^{n_0} : |\hat{\mu}_0(a) - \mu_0(a)| > \delta \right\}) \leq \epsilon_0 \text{ for one or more } \text{a} \in \mathcal{X}.
    \label{eq:slln}
\end{equation}

\subsection{Test 3}
Using the estimate $\hat{\mu}_0$, we now define the modified CUSUM Test as, follows,

Test 3: Starting with $n = 1$, for samples $x_{n_0+1}, x_{n_0+2},$ $\ldots, x_{n_0+n}$ and $\gamma > 1$ if
\begin{equation}
\max _{1 \leq k \leq n}\left(-L\left(x_{n_0+k}^{n_0+n}\right)-\log \hat{\mu}_0 \left(x_{n_0+k}^{n_0+n}\right) - n \lambda \right) \geq \log \gamma,
\end{equation}

stop the test and decide that a change has occurred. Else, sample for $x_{n_0+n+1}$ and continue the test.
Note that for this test, the upward drift rate is $D\left(\mu_{1} \| \hat{\mu}_0\right) - \lambda$ and the downward drift rate is $\lambda - D\left(\mu_{0} \| \hat{\mu}_0\right)$. This means that for the test to work, we need $D\left(\mu_{0} \| \hat{\mu}_0\right) < \lambda < D\left(\mu_{1} \| \hat{\mu}_0\right)$. This also implies that $D\left(\mu_{0} \| \hat{\mu}_0\right) < D\left(\mu_{1} \| \hat{\mu}_0\right)$.

An auxiliary stopping time for this test is defined as,
\begin{align}
N(\alpha)=\inf &\left\{n:-L\left(x_{n_0+1}^{n_0 + n}\right)-\log \hat{\mu}_{0}\left(x_{n_0+1}^{n_0+n}\right) \right. \nonumber \\
&\left. \text{     } -n \lambda \geq-\log \alpha\right\}.
\label{eq:nalpha}
\end{align}
\subsection{Behaviour Under Pre-Change Distribution}
To obtain various performance measures, we will first study the expected behaviour of the stopping time defined earlier under pre-change distribution $\mu_0$ and post-change distribution $\mu_1$. For this, we define the following notations,
\begin{gather*}
f_n(x_{n_0 + 1}^{n_0 + n}) \triangleq \frac{\log\left( \frac{\mu_0(x_{n_0 + 1}^{n_0 + n})}{\hat{\mu}_0(x_{n_0 + 1}^{n_0 + n})} \right)}{n} \\
l_n(x_{1}^{n_0 + n}) \triangleq-L\left(x_{n_0 + 1}^{n_0 + n}\right)-\log \hat{\mu}_0\left(x_{n_0 + 1}^{n_0 + n}\right)-n \lambda \\
B_{n} \triangleq\left\{x_{1}^{n_0 + n}: l_k(x_{1}^{n_0 + k})<-\log \alpha, 1\leq k<n, \right. \\
\left. l_n(x_{1}^{n_0 + n}) \geq-\log \alpha\right\} \\
B \triangleq \cup_{n = 1}^{\infty} B_n \\
E_n^{\delta} \triangleq\left\{x_{1}^{n_0 + n} : |\hat{\mu}_0(a) - \mu_0(a)| < \delta \text{  } \forall \text{a} \in \mathcal{X} \right. \\
\left. \text{ where } \hat{\mu}_0 \text{ is based on samples } x_{1}^{n_0} \right\} \\
E^{\delta} \triangleq \cup_{n = 1}^{\infty} E_n^{\delta}
\end{gather*}
Note that for the set $B_n$, the symbols $x_1^{n_0}$ can be arbitrary since they don't play a role in the constraint equation. Similarly, for $E_n^{\delta}$, the symbols $x_{n_0 + 1}^{n_0 + n}$ can be arbitrary since the empirical distribution is estimated using the first $n_0$ symbols.

Having defined the above notations, we give the following lemma.
\begin{lemma}
If $x_{1}^{n_0+n} \in  E_n^{\delta}$, then $f_{n}(x_{n_0 + 1}^{n_0 + n})$ is bounded above as (for all n)
\begin{equation}
    f_{n} \leq \log (\beta),
    \label{eq:boundmain}
\end{equation}
where $\beta=\frac{\mu_{0}(a)}{\mu_{0}(a)-\delta}$ and $a$ is such that $a$ is the minimum probability symbol. Also, as $\delta \rightarrow 0$ and $n_0 \rightarrow \infty$, we have, $\beta \rightarrow 1$.
\label{lemma:bound}
\end{lemma}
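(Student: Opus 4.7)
\smallskip

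\noindent\textbf{Proof plan.} The plan is to expand $f_n$ explicitly for an i.i.d.\ source via the type of $x_{n_0+1}^{n_0+n}$, bound each per-symbol log-ratio using the $E_n^\delta$ condition, and identify the worst ratio as coming from the least-probable symbol.

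\smallskip

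\noindent First, I would write, using that the source is memoryless,
\begin{equation*}
\mu_0(x_{n_0+1}^{n_0+n}) = \prod_{a \in \mathcal{X}} \mu_0(a)^{N(a \mid x_{n_0+1}^{n_0+n})}, \qquad
\hat{\mu}_0(x_{n_0+1}^{n_0+n}) = \prod_{a \in \mathcal{X}} \hat{\mu}_0(a)^{N(a \mid x_{n_0+1}^{n_0+n})},
\end{equation*}
so that
\begin{equation*}
f_n(x_{n_0+1}^{n_0+n}) \;=\; \frac{1}{n}\sum_{a \in \mathcal{X}} N(a \mid x_{n_0+1}^{n_0+n}) \,\log\frac{\mu_0(a)}{\hat{\mu}_0(a)}.
\end{equation*}
Since the counts sum to $n$, this is a convex combination of the quantities $\log(\mu_0(a)/\hat{\mu}_0(a))$, so it suffices to uniformly upper bound each such term.

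\smallskip

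\noindent Next I would use the hypothesis $x_1^{n_0+n}\in E_n^\delta$, which gives $\hat{\mu}_0(a) > \mu_0(a)-\delta$ for every $a \in \mathcal{X}$ (assuming $\delta < \min_a \mu_0(a)$, which is implicit since otherwise the bound is vacuous). This yields the per-symbol inequality
\begin{equation*}
\frac{\mu_0(a)}{\hat{\mu}_0(a)} \;<\; \frac{\mu_0(a)}{\mu_0(a)-\delta}.
\end{equation*}
A quick derivative check shows that $p \mapsto p/(p-\delta)$ is strictly decreasing on $p>\delta$, so the right-hand side is maximized when $\mu_0(a)$ is the \emph{smallest} nonzero symbol probability. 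Denoting that minimizer by $a$ gives $\mu_0(a)/\hat{\mu}_0(a) < \beta$ uniformly over the alphabet; taking logs and substituting back,
\begin{equation*}
f_n(x_{n_0+1}^{n_0+n}) \;<\; \frac{1}{n}\sum_{a \in \mathcal{X}} N(a \mid x_{n_0+1}^{n_0+n})\,\log\beta \;=\; \log\beta,
\end{equation*}
which is \eqref{eq:boundmain}.

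\smallskip

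\noindent Finally, for the limit claim, as $\delta\to 0$ we directly have $\beta\to \mu_0(a)/\mu_0(a)=1$, so $\log\beta\to 0$. The role of $n_0\to\infty$ is that \eqref{eq:slln} guarantees we may take $\delta$ arbitrarily small on a high-probability event, so that the bound becomes tight with probability approaching one. I do not expect any real obstacle: the only mild subtlety is justifying that the worst case is the minimum-probability symbol (handled by monotonicity of $p/(p-\delta)$) and checking that the bound $\log(\mu_0(a)/\hat{\mu}_0(a))<\log\beta$ remains valid on indices where $\hat{\mu}_0(a) \geq \mu_0(a)$, which is immediate since then the left side is $\leq 0 < \log\beta$.
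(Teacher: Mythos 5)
Your proposal is correct and follows essentially the same route as the paper's own proof: expand $f_n$ as a count-weighted average of the per-symbol log-ratios, use the $E_n^{\delta}$ condition together with the monotonicity of $p \mapsto p/(p-\delta)$ to bound each ratio by $\beta$ at the minimum-probability symbol, and sum. Your added remarks (the implicit requirement $\delta < \min_a \mu_0(a)$ and the trivial case $\hat{\mu}_0(a) \geq \mu_0(a)$) are harmless refinements of the same argument.
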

\begin{proof}
Note that since the symbols $x_i's$ are independent and identically distributed, we can write $f_n(x_{n_0 + 1}^{n_0 + n})$ as
\begin{equation}
    f_n(x_{n_0 + 1}^{n_0 + n}) = \sum_{a \in \mathcal{X}} \frac{\text{N}(a \mid x_{n_0 + 1}^{n_0 + n})}{n} \cdot \log\left(\frac{\mu_0(a)}{\hat{\mu}_0(a)}\right).
\end{equation}
Now, since $x_1^{n_0 + n} \in E_n^{\delta}$, by the definition of $E_n^{\delta}$ we have, for any symbol $a \in \mathcal{X}$ sitting at any position in $x_i$ for $i = n_0 + 1, n_0 + 2, \ldots, n_0 + n$,
\begin{equation}
    \frac{\mu_0(a)}{\hat{\mu}_0(a)} < \frac{\mu_0(a)}{\mu_0(a) - \delta} \cdot
\end{equation}
The term on the right hand side is a decreasing function of $\mu_0(a)$. Therefore, if $a'$ is the minimum probability symbol, we can bound the above expression for $f_n(x_{1}^{n_0 + n})$ by (for $\delta < \mu_0(a)$),
\begin{align*}
    f_n(x_{n_0 + 1}^{n_0 + n}) &\leq \sum_{a \in \mathcal{X}} \frac{\text{N}(a \mid x_{n_0 + 1}^{n_0 + n})}{n} \cdot \log \left(\frac{\mu_0(a')}{\mu_0(a') - \delta}\right) \\
    &\leq \log \left(\frac{\mu_0(a')}{\mu_0(a') - \delta}\right) = \log(\beta).
    \label{eq:boundE}
\end{align*}
\end{proof}
We now derive the general expression for the probability of error for the proposed test. Formally, this is stated as follows.
\begin{lemma}
The probability of error for the proposed test is bounded and is given by
\begin{equation}
    \mu_0(N(\alpha) < \infty) \leq \frac{\alpha}{2^{\lambda - \log(\beta)} - 1} + \epsilon_0,
    \label{eq:pe}
\end{equation}
where $\epsilon_0\rightarrow 0$ and $\beta \rightarrow 1$ as $n_0 \rightarrow \infty$.
\end{lemma}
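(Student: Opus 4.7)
The plan is to mimic the original Jacob--Bansal argument (which underlies Lemma~\ref{lemma:pe}) but with an extra layer of conditioning on the empirical distribution $\hat\mu_0$, so that the mismatch between $\mu_0$ and $\hat\mu_0$ can be controlled by Lemma~\ref{lemma:bound}. Concretely, I would split
\[
\mu_0(N(\alpha)<\infty)=\mu_0(B)=\mu_0(B\cap E^{\delta})+\mu_0(B\cap (E^{\delta})^{c})
\]
and bound the two terms separately. The second term is a pure concentration event: $(E^{\delta})^{c}$ depends only on the first $n_0$ samples, and the defining condition \eqref{eq:slln} (Weak Law of Large Numbers applied symbol by symbol, then union bound over $|\mathcal{X}|$) gives $\mu_0((E^{\delta})^c)\le \epsilon_0$ once $n_0$ is large enough. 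This yields the additive $\epsilon_0$ term.

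For the main term, I would condition on the first block $x_1^{n_0}$, since once $x_1^{n_0}$ is fixed the estimate $\hat\mu_0$ is fixed and the sets $B_n$ are determined by the post-$n_0$ samples only. For a fixed $x_1^{n_0}\in E^{\delta}$, the defining inequality of $B_n$ rearranges to
\[
\hat\mu_0(x_{n_0+1}^{n_0+n})\le \alpha\,2^{-L(x_{n_0+1}^{n_0+n})}\,2^{-n\lambda},
\]
and Lemma~\ref{lemma:bound} (applied on $E_n^{\delta}$, i.e. applied pointwise for the fixed $\hat\mu_0$) gives $\mu_0(x_{n_0+1}^{n_0+n})\le \beta^{n}\,\hat\mu_0(x_{n_0+1}^{n_0+n})$ because $f_n\le \log\beta$ means $\mu_0/\hat\mu_0\le \beta^{n}$. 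Summing the resulting upper bound on $\mu_0(x_{n_0+1}^{n_0+n})$ over $x_{n_0+1}^{n_0+n}\in B_n(x_1^{n_0})$ and invoking Kraft's inequality~\eqref{eq:kraft} on the prefix code $\varphi_n$ yields
\[
\sum_{x_{n_0+1}^{n_0+n}\in B_n(x_1^{n_0})}\!\!\mu_0(x_{n_0+1}^{n_0+n})\le \alpha\,\beta^{n}\,2^{-n\lambda}.
\]
Summing over $n\ge 1$ gives a geometric series $\alpha\sum_{n\ge1}2^{-n(\lambda-\log\beta)}=\alpha/(2^{\lambda-\log\beta}-1)$, which is finite precisely because we chose $\lambda$ large enough that $\lambda>\log\beta$ (this is where the condition $D(\mu_0\|\hat\mu_0)<\lambda$ together with Lemma~\ref{lemma:bound} enters). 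Finally averaging over $x_1^{n_0}\in E^{\delta}$ with weight $\mu_0(x_1^{n_0})$ keeps the bound unchanged and yields $\mu_0(B\cap E^{\delta})\le \alpha/(2^{\lambda-\log\beta}-1)$.

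Adding the two pieces produces the claimed inequality~\eqref{eq:pe}, and the concluding remark that $\epsilon_0\to 0$ and $\beta\to 1$ as $n_0\to\infty$ (with $\delta\to 0$) is just what~\eqref{eq:slln} and Lemma~\ref{lemma:bound} already give. The main obstacle is the bookkeeping around the two sample blocks: one has to be careful that $B_n$ and $E_n^{\delta}$ depend on disjoint coordinates so that conditioning on $x_1^{n_0}$ cleanly decouples them, and that the prefix property of $\varphi_n$ together with Kraft's inequality may be invoked inside the fixed-$x_1^{n_0}$ slice. Once that decoupling is in place, the remainder is the same geometric-series calculation used for Test~2, with the rate $\lambda$ effectively degraded to $\lambda-\log\beta$ due to the empirical estimation of the pre-change distribution.
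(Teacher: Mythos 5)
Your proposal is correct and follows essentially the same route as the paper: the identical decomposition $\mu_0(B)=\mu_0(B\cap E^{\delta})+\mu_0(B\cap (E^{\delta})^{c})$, the bound $f_n\le\log\beta$ from Lemma~\ref{lemma:bound} on $E_n^{\delta}$, the rearranged $B_n$ inequality, Kraft's inequality, and the geometric series in $\lambda-\log\beta$. The only difference is that you condition on $x_1^{n_0}$ and average at the end, whereas the paper sums directly over joint sequences $x_1^{n_0+n}$ and drops the prefix probability via $\mu_0(x_1^{n_0})\le 1$; your version is the tidier way to justify applying Kraft's inequality to the suffix code within each fixed-$\hat{\mu}_0$ slice, but it is the same argument.
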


\begin{proof}
The probability of error is defined as the probability that the test terminates and we detect a change in distribution even though there was no actual change in the distribution of the source. Also, note that using the definition of $f_n(x_{n_0+ 1}^{n_0 + n})$, we can write,
\begin{equation}
    \mu_0(x_{n_0 + 1}^{n_0 + n}) = \hat{\mu}_0(x_{n_0 + 1}^{n_0 + n}) \cdot 2^{n f_n(x_{n_0+ 1}^{n_0 + n})}.
\end{equation}

Using this observation, the probability of error can be given as,
\begin{align}
\mu_{0}(N(\alpha)<\infty) \nonumber &= \mu_0\left(B\right) = \mu_0\left(B \cap E^{\delta}\right) + \mu_0\left(B \cap (E^{\delta})^c\right) \nonumber \\
&=\sum_{n=1}^{\infty} \mu_{0}\left(B_{n} \cap E_n^{\delta} \right) + \mu_0\left(B \cap (E^{\delta})^c\right) \nonumber \\
&=\sum_{n=1}^{\infty} \left( \sum_{x_{1}^{n_0 + n} \in B_{n} \cap E_n^{\delta}} \mu_{0}\left(x_{1}^{n_0 + n}\right) \right) \nonumber  \\
&+ \mu_0\left(B \cap (E^{\delta})^c\right).
\label{eq:totaleq}
\end{align}
Here, we have split the total probability into two cases, since the corresponding events are disjoint. Also, note that $B_n \cap E^{\delta} = B_n \cap E_n^{\delta}$, since $B_n$ consists of only $n$-long sequences.

Now, we will calculate bounds on each of the terms of Equation \eqref{eq:totaleq} individually. Starting with the first term, we get,
\begin{align}
&\sum_{ x_{1}^{n_0 + n} \in B_{n} \cap E^{\delta}} \mu_{0}\left(x_{1}^{n_0 + n}\right) \leq \sum_{ x_{1}^{n_0 + n} \in B_{n} \cap E^{\delta}} \mu_{0}\left(x_{n_0 + 1}^{n_0 + n}\right) \\
&\leq\sum_{ x_{1}^{n_0 + n} \in B_{n} \cap E_n^{\delta}}  \hat{\mu}_0(x_{n_0 + 1}^{n_0 + n}) \cdot 2^{n f_{n}\left(x_{n_0 + 1}^{n_0 + n}\right)} \\
&\leq \sum_{x_{1}^{n_0 + n} \in B_{n} \cap E_n^{\delta}}  \hat{\mu}_0(x_{n_0 + 1}^{n_0 + n}) \cdot 2^{n \log (\beta)} \\
&\leq \sum_{x_{1}^{n_0 + n} \in B_{n} \cap E_n^{\delta}}  2^{\log \alpha-L\left(x_{n_0 + 1}^{n_0 + n}\right)-n \left(\lambda - \log (\beta)\right)} \\
&= \alpha \cdot 2^{-n \left(\lambda - \log (\beta)\right)} \left( \sum_{x_{1}^{n_0 + n} \in B_{n} \cap E_n^{\delta}}  2^{-L\left(x_{n_0 + 1}^{n_0 + n}\right)} \right) \\
&\leq \alpha \cdot 2^{-n \left(\lambda - \log (\beta)\right)}.
\end{align}
Here, the first inequality comes from the fact that the probability is bounded as $(1 - \epsilon_0) \leq \mu_0(x_1^{n_0}) \leq 1$. This still gives a tight bound on the probability of error because of Equation \eqref{eq:slln}. The second inequality is due to Lemma \ref{lemma:bound} since $x_1^{n_0 + n} \in E_n^{\delta}$. The third inequality results from the definition of $B_n$ and the last inequality is due to Kraft's Inequality (See Equation \ref{eq:kraft}).

Also, since $\mathbb{P}(A \cap B) \leq \mathbb{P}(B)$, using Equation \eqref{eq:slln}, we have,
\begin{align}
\mu_{0}\left(B \cap (E^{\delta})^c\right) &\leq \mu_{0}\left((E^{\delta})^c\right) \\
&\leq \epsilon_0.
\end{align}

Combining all these equations, we get,
\begin{align}
    \mu_{0}(N(\alpha)<\infty) &\leq \alpha \sum_{n=1}^{\infty} 2^{-n \left(\lambda - \log(\beta)\right)} + \epsilon_0 \\
    &= \frac{\alpha}{2^{\lambda -\log(\beta)} - 1} +  \epsilon_0\cdot
\end{align}
On comparing this expression with the probability of error for the JB-Page's test (Test (2)), we can see that the probability of error increases in our case and it is now a function of $\log (\beta)$. Moreover, for the probability of error to be bounded, we need,
\begin{equation}
    \lambda >\max{\left(\log(\beta), D(\mu_0 \| \hat{\mu}_0)\right)}
\end{equation}
Recall that for the JB-Page Test, we had $\lambda > 0$. However, as $n_0 \rightarrow \infty$, we can make the lower bound $\max{\left(\log(\beta), D(\mu_0 \| \hat{\mu}_0)\right)}$ and the error term $\epsilon_0$ as small as required.
\end{proof}

\subsection{Behaviour Under Post-Change Distribution}
We will now note the characteristics of this test under the post change distribution $\mu_1$.
\begin{lemma}
If $\lambda < D(\mu_1 \| \hat{\mu}_0)$, the test will terminate with probability one under $\mu_1$ (different from ${\mu}_0$). That is (recall Equation \eqref{eq:nalpha}),
\begin{equation*}
    \mu_1(N(\alpha) < \infty) = 1.
\end{equation*}
\end{lemma}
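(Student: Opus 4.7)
The plan is to argue that, under $\mu_1$ on the post-change samples, the statistic $l_n$ drifts upward at a strictly positive rate $D(\mu_1\|\hat\mu_0)-\lambda$, so it eventually crosses the threshold $-\log\alpha$ almost surely. Throughout I would treat $\hat\mu_0$ as a fixed probability mass function (obtained from the pre-change block $x_1^{n_0}$); the statement is then conditional on the event that the estimate is drawn, and the hypothesis $\lambda<D(\mu_1\|\hat\mu_0)$ is assumed for this realization.

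First I would lower-bound the codelength contribution using universality. Because $L$ is strongly universal on the class $\mathcal{M}$ containing $\mu_1$, we have $L(x_{n_0+1}^{n_0+n})\le -\log\mu_1(x_{n_0+1}^{n_0+n})+R_L^n$ with $R_L^n=o(n)$. Substituting this into the definition of $l_n$ gives
\begin{equation*}
l_n(x_1^{n_0+n})\;\ge\;\log\frac{\mu_1(x_{n_0+1}^{n_0+n})}{\hat\mu_0(x_{n_0+1}^{n_0+n})}-R_L^n-n\lambda.
\end{equation*}

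Next I would apply the strong law of large numbers to the i.i.d.\ random variables $\log(\mu_1(X_i)/\hat\mu_0(X_i))$, whose mean under $\mu_1$ is exactly $D(\mu_1\|\hat\mu_0)$ (this uses the standing assumption that every symbol with positive $\mu_0$-probability appears in $x_1^{n_0}$, together with the implicit support condition $\mu_1\ll\hat\mu_0$ required for $D(\mu_1\|\hat\mu_0)$ to be finite, which is part of the hypothesis $\lambda<D(\mu_1\|\hat\mu_0)<\infty$). Combining with the previous display,
\begin{equation*}
\frac{1}{n}\,l_n(x_1^{n_0+n})\;\xrightarrow[n\to\infty]{\mathrm{a.s.\ }\mu_1}\;D(\mu_1\|\hat\mu_0)-\lambda\;>\;0,
\end{equation*}
since $R_L^n/n\to 0$ by universality. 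Consequently $l_n\to+\infty$ almost surely under $\mu_1$, so for $\mu_1$-almost every realization there exists $n$ with $l_n\ge -\log\alpha$, which by the definition in \eqref{eq:nalpha} gives $N(\alpha)<\infty$. Hence $\mu_1(N(\alpha)<\infty)=1$.

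The only mild obstacle I see is being careful that the SLLN limit genuinely equals $D(\mu_1\|\hat\mu_0)$ rather than a cross-entropy expression: one has $\mathbb{E}_{\mu_1}[\log(\mu_1(X)/\hat\mu_0(X))]=-H(\mu_1)-\mathbb{E}_{\mu_1}[\log\hat\mu_0(X)]=D(\mu_1\|\hat\mu_0)$, so the identification is immediate once $\hat\mu_0$ is treated as a deterministic distribution. Everything else is a direct consequence of universality of $L$ and the strong law, and no estimates beyond those already used for Test~2 are needed.
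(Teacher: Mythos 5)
Your proof is correct and follows essentially the same route as the paper, which simply defers to Lemma 3 of the Jacob--Bansal reference with $\mu_0$ replaced by $\hat\mu_0$: the cited argument is precisely the universality bound on $L$ plus the strong law applied to $\log(\mu_1(X_i)/\hat\mu_0(X_i))$, giving positive drift $D(\mu_1\|\hat\mu_0)-\lambda$ and hence almost-sure threshold crossing. Your write-up just makes explicit the details (fixed $\hat\mu_0$, support condition) that the paper leaves to the citation.
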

\begin{proof}
The proof for this lemma is similar to that of Lemma 3 in \cite{jacob2008} with $\mu_0$ replaced by $\hat{\mu}_0$. This replacement of $\mu_0$ with $\hat{\mu}_0$ works out in this case because the behaviour of this test under the post change distribution $\mu_1$ is identical to that of a test where the pre-change distribution was $\hat{\mu}_0$ instead of $\mu_0$.
\end{proof}

We will now state the performance of the proposed test under the distribution $\mu_1$ conditioned on the first $n_0$ samples $x_1^{n_0}$.

\begin{theorem}
For the modified CUSUM test (Test 3), and the defined stopping time, $N(\alpha)$ with $\alpha \rightarrow 0$, for $\lambda < D\left(\mu_{1} \| \hat{\mu}_0\right)$, we have,
\begin{equation}
\mathbb{E}_{\mu_{1}}(N(\alpha) \mid x_1^{n_0}) \sim|\log \alpha| /\left(D\left(\mu_{1} \| \hat{\mu}_0\right)-\lambda\right).
\end{equation}
\label{th:34}
\end{theorem}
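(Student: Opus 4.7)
The plan is to reduce the result to the analogous asymptotic for the JB-Page auxiliary walk in \cite{jacob2008}. Conditioning on $x_1^{n_0}$ freezes $\hat{\mu}_0$ as a deterministic probability law, while under $\mu_1$ the post-change samples $X_{n_0+1},X_{n_0+2},\ldots$ remain i.i.d.\ $\mu_1$ and are independent of $x_1^{n_0}$. Consequently, the statistic
\[
S_n \;:=\; -L(X_{n_0+1}^{n_0+n})-\log\hat{\mu}_0(X_{n_0+1}^{n_0+n})-n\lambda
\]
is structurally identical to the JB-Page auxiliary walk, only with $\mu_0$ replaced by the fixed distribution $\hat{\mu}_0$. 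I would therefore replay the argument behind Lemma~2 of \cite{jacob2008}, substituting $D(\mu_1\|\hat{\mu}_0)$ for $D(\mu_1\|\mu_0)$ throughout.

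The first step is to identify the drift of $S_n$ under $\mu_1$. By universality, $L(X_1^n)\le -\log\mu_1(X_1^n)+R_L^n$ with $R_L^n=o(n)$, and the source-coding theorem gives $\mathbb{E}_{\mu_1}[L(X_1^n)]\ge nH(\mu_1)$; combined with an AEP-type argument this yields $L(X_1^n)/n\to H(\mu_1)$ a.s.\ under $\mu_1$. The strong law applied to the i.i.d.\ summands $-\log\hat{\mu}_0(X_{n_0+i})$ gives $-\log\hat{\mu}_0(X_{n_0+1}^{n_0+n})/n\to H(\mu_1)+D(\mu_1\|\hat{\mu}_0)$ a.s. Hence $S_n/n\to D(\mu_1\|\hat{\mu}_0)-\lambda>0$ a.s., conditional on $x_1^{n_0}$.

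Next I would establish matching upper and lower bounds on $\mathbb{E}_{\mu_1}[N(\alpha)\mid x_1^{n_0}]$. For the upper bound, fix $\eta>0$ and set $n_\alpha=\lceil(1+\eta)|\log\alpha|/(D(\mu_1\|\hat{\mu}_0)-\lambda)\rceil$. Using $S_n\ge \log\bigl(\mu_1/\hat{\mu}_0\bigr)(X_{n_0+1}^{n_0+n})-R_L^n-n\lambda$ from universality, a Cramer--Chernoff bound applied to the i.i.d.\ log-ratio sum yields exponentially small $P_{\mu_1}(N(\alpha)>n\mid x_1^{n_0})$ for $n\ge n_\alpha$, so that $\mathbb{E}[N(\alpha)\mid x_1^{n_0}]\le n_\alpha+O(1)$. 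For the lower bound, I would combine the pathwise inequality $S_{N(\alpha)}\ge |\log\alpha|$ with a Wald-style identity for the increments of $S_n$ (whose per-step mean is $D(\mu_1\|\hat{\mu}_0)-\lambda+o(1)$) to obtain $\mathbb{E}[N(\alpha)\mid x_1^{n_0}]\ge |\log\alpha|/(D(\mu_1\|\hat{\mu}_0)-\lambda)-o(|\log\alpha|)$. Letting $\eta\downarrow 0$ yields the claimed asymptotic equivalence.

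The main obstacle is that while $-\log\hat{\mu}_0(X_{n_0+1}^{n_0+n})$ decomposes cleanly as an i.i.d.\ sum, the universal code length $L(X_{n_0+1}^{n_0+n})$ does not, so Wald's identity cannot be applied directly to the increments of $S_n$. The fix, used in \cite{jacob2008}, is to approximate $L$ by the idealized length $-\log\mu_1(X_{n_0+1}^{n_0+n})$, which differs from $L$ by at most $R_L^n=o(n)$ uniformly by universality, apply classical renewal theory to the resulting log-likelihood-ratio walk, and absorb the universal-coding correction into the lower-order error term of the asymptotic equivalence.
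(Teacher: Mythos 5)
Your reduction is essentially the paper's own proof: the paper disposes of Theorem~\ref{th:34} in one line by invoking Theorem~2 of \cite{jacob2008} with $\mu_0$ replaced by $\hat{\mu}_0$, and you make the same move while correctly supplying the key justification (conditioning on $x_1^{n_0}$ freezes $\hat{\mu}_0$ as a fixed law, and the post-change samples are i.i.d.\ $\mu_1$ independent of it), then sketch the renewal-type argument that \cite{jacob2008} carries out. One caveat on your final paragraph: universality only gives the one-sided bound $L(x_1^n)\leq -\log\mu_1(x_1^n)+R_L^n$, not a uniform two-sided $o(n)$ approximation, so the lower-bound half of the asymptotic needs the Kraft-inequality/Markov argument $P\bigl(L(X_1^n)<-\log\mu_1(X_1^n)-c_n\bigr)\leq 2^{-c_n}$ (as in \cite{jacob2008}) rather than a pointwise replacement of $L$ by $-\log\mu_1$.
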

\begin{proof}
This can be proved by arguing along similar lines as Theorem 2 in \cite{jacob2008}.
\end{proof}

\section{Performance Bounds}
To study the performance of the modified CUSUM test (Test 3), we will use the following parameter to denote the stopping time of the test.
\begin{align*}
M(\gamma)=\inf \{n: \max_{1 \leq k \leq n}&\left(-L\left(x_{n_0 + k}^{n_0 + n}\right)-\log \hat{\mu}_0\left(x_{n_0 + k}^{n_0 + n}\right) \right. \\
&\left. -n \lambda\right) \geq \log \gamma\}
\end{align*}
Note that once we have generated the empirical distribution $\hat{\mu}_0$ using the first $n_0$ samples, we can discard these samples since the test in Equation \eqref{eq:nalpha} does not take into account these symbols. Moreover, for simplicity, we now denote the symbols $x_{n_0 + 1}^{n_0 + n}$ by $y_1^{n}$. Now, the stopping time $M(\gamma)$ can be re-written as,
\begin{align*}
M(\gamma)=\inf \{n: \max_{k \leq n}&\left(-L\left(y_{k}^{n}\right)-\log \hat{\mu}_0\left(y_{k}^{n}\right) -n \lambda\right) \geq \log \gamma\}
\end{align*}
Since our test uses the first $n_0$ samples from $\mu_0$ to generate the empirical distribution $\hat{\mu}_0$, we work with the conditional expectation when studying the performance under $\mu_1$, conditioned on the initial $n_0$ long segment generated by $\mu_0$.

Now, consider the case when the first $n_0 + m$ samples come from the source $\mu_0$ and the rest of the samples come from $\mu_1$. That is, $Y_i \sim \mu_0 \forall i < m$ and  $Y_i \sim \mu_1 \forall i \geq m$. In this case, the change point for $Y_is$ is $m$. The probability measure for this variable $m$ is denoted by $P_m$. Also, $P_0$ denotes the probability measure for the case when there is no change, i.e., all the symbols are generated using the source $\mu_0$. Using a similar notation, we denote the expectation under the probability distribution $P_m$ by $E_m$. For this type of setting, Lorden introduced a minimax type criterion \cite{lorden} to measure the performance of such tests. This criterion is defined for any stopping time $N$ as we had defined earlier. The test states that,
\begin{equation*}
\bar{\mathbb{E}}_{1}(N) \triangleq \sup _{m \geq 1} ess \sup \mathbb{E}_{m}\left[(N-m+1)^{+} \mid Y_{1}, \ldots, Y_{m-1}\right].
\end{equation*}

To study the behaviour of $\bar{\mathbb{E}}_{1}(N(\gamma))$ (or, in our case, $\bar{\mathbb{E}}_{1}(N(\gamma) \mid x_1^{n_0})$), we first investigate the characteristics of $\mathbb{E}_0(M(\gamma))$. This represents the expected time after which the algorithm detects a change and stops the test, in the case when there was no actual change in the source. That is, all symbols were from the source $\mu_0$. Ideally, we want  $\mathbb{E}_0(M(\gamma))$ to be $\infty$ because the algorithm should not detect any change when there is no change.

This brings us to the following theorem, proposed in \cite{lorden}.
\begin{theorem}[Lorden]
Let N be a extended stopping variable with respect to $X_1, X_2, \ldots $, such that,
\begin{equation*}
    \mathbb{P}_0(N < \infty) \leq \alpha 
\end{equation*}
and for $k = 1, 2, \ldots$, let $N_k$ denote the stopping variable obtained by applying $N$ to $X_k, X_{k+1}, \ldots$. Then, define,
\begin{equation*}
N^{*} = \min \{N_k + k - 1 \mid k = 1, 2, \ldots\}.
\end{equation*}
Then $N^{*}$ is an extended stopping variable with,
\begin{equation*}
\mathbb{E}_0(N^{*}) \geq \frac{1}{\alpha},
\end{equation*}
and,
\begin{equation*}
\bar{\mathbb{E}}_1(N^{*}) \leq \mathbb{E}_1(N).
\end{equation*}
\label{theorem:lorden}
\end{theorem}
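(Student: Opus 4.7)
The plan is to handle the two conclusions separately, since they call for rather different machinery. The upper bound on $\bar{\mathbb{E}}_1(N^*)$ is essentially a pointwise/conditioning argument, whereas the lower bound $\mathbb{E}_0(N^*) \geq 1/\alpha$ requires a counting identity that exploits independence of past and future under $\mathbb{P}_0$.

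For $\bar{\mathbb{E}}_1(N^*) \leq \mathbb{E}_1(N)$, my starting point is the pointwise bound $N^* \leq N_m + m - 1$ for every $m \geq 1$, which is immediate from the definition of $N^*$ as an infimum and which rearranges to $(N^* - m + 1)^+ \leq N_m$. Next, conditional on $X_1, \ldots, X_{m-1}$, the variable $N_m$ is a function of $X_m, X_{m+1}, \ldots$ which are i.i.d.\ post-change under $\mathbb{P}_m$, so its conditional law equals the law of $N$ under $\mathbb{P}_1$. Taking $\mathbb{E}_m[\,\cdot\mid X_1, \ldots, X_{m-1}]$ on both sides yields $\mathbb{E}_m[(N^* - m + 1)^+ \mid X_1, \ldots, X_{m-1}] \leq \mathbb{E}_1[N]$, and then applying the essential supremum in the conditioning variables followed by $\sup_m$ gives exactly $\bar{\mathbb{E}}_1(N^*) \leq \mathbb{E}_1[N]$. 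This step is essentially bookkeeping once the time-shift/measurability is written out carefully.

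For $\mathbb{E}_0(N^*) \geq 1/\alpha$, the crucial observation is a measurability split. The event $\{N^* \geq n\}$ says that no restart $N_k + k - 1$ has triggered by time $n-1$; since each such constraint for $k \leq n-1$ reads at most $X_k, \ldots, X_{n-1}$, we have $\{N^* \geq n\} \in \sigma(X_1, \ldots, X_{n-1})$. By contrast $\{N_n < \infty\}$ depends only on $X_n, X_{n+1}, \ldots$, and under $\mathbb{P}_0$ these are independent of the past, with $\mathbb{P}_0(N_n < \infty) \leq \alpha$ by stationarity. Therefore
\begin{equation*}
\mathbb{E}_0\!\left[\sum_{n=1}^{\infty}\mathbf{1}[n \leq N^*]\,\mathbf{1}[N_n < \infty]\right] = \sum_{n=1}^{\infty} \mathbb{P}_0(n \leq N^*)\,\mathbb{P}_0(N_n < \infty) \leq \alpha\,\mathbb{E}_0(N^*).
\end{equation*}
On the other hand, whenever $N^* < \infty$ there must exist some $k \leq N^*$ with $N_k + k - 1 = N^*$, so in particular $N_k < \infty$ for that index; the sum above is then $\geq \mathbf{1}[N^* < \infty]$, which gives $\alpha\,\mathbb{E}_0(N^*) \geq \mathbb{P}_0(N^* < \infty)$. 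This yields the desired bound either directly when $\mathbb{P}_0(N^* < \infty) = 1$, or trivially because $\mathbb{P}_0(N^* = \infty) > 0$ forces $\mathbb{E}_0(N^*) = \infty$. The delicate step to justify carefully is the measurability claim that isolates $\{N^* \geq n\}$ inside $\sigma(X_1,\ldots,X_{n-1})$; once that is in place, the rest reduces to Fubini and a one-line counting inequality.
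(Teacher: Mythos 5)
The paper offers no proof of this statement at all: it is quoted verbatim as a known result of Lorden (Theorem 2 of \cite{lorden}), so there is no internal argument to compare against. Your proof is correct, and it is in substance Lorden's original argument: the bound $\bar{\mathbb{E}}_1(N^*) \leq \mathbb{E}_1(N)$ from the pointwise inequality $(N^*-m+1)^+ \leq N_m$ together with the fact that $N_m$ is independent of $X_1,\ldots,X_{m-1}$ and has the $\mathbb{P}_1$-law of $N$ under $\mathbb{P}_m$; and the bound $\mathbb{E}_0(N^*)\geq 1/\alpha$ from the Wald-type identity $\sum_n \mathbb{P}_0(N^*\geq n)\,\mathbb{P}_0(N_n<\infty) \leq \alpha\,\mathbb{E}_0(N^*)$, justified by the key measurability split $\{N^*\geq n\}\in\sigma(X_1,\ldots,X_{n-1})$ versus $\{N_n<\infty\}\in\sigma(X_n,X_{n+1},\ldots)$, which you identify correctly (for $k\leq n-1$ the constraint $N_k+k-1\geq n$ reads $N_k>n-k$, an event determined by the first $n-k$ symbols of the shifted stream). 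Two small points you could make explicit: on $\{N^*<\infty\}$ the minimum defining $N^*$ is actually attained at some $k$ (the values are extended integers, so a finite infimum is achieved), which is what guarantees the sum dominates $\mathbf{1}[N^*<\infty]$ and that the achieving index satisfies $k\leq N^*$ because $N_k\geq 1$; and the same measurability computation shows $\{N^*=n\}\in\sigma(X_1,\ldots,X_n)$, which disposes of the (unaddressed) claim that $N^*$ is itself an extended stopping variable. Neither omission is a gap in the mathematics.
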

We also define a class of tests $S_{\gamma}$ such that if any test $S \in S_{\gamma}$, then it satisfies the property that $\mathbb{E}_0(S) > \gamma$.
Using this definition, another result on the performance of various tests of the class $S_{\gamma}$ is given as follows,
\begin{theorem}[Lorden]
For all tests $S$ in the class $\mathcal{S}_{\gamma}$, with $\gamma \rightarrow \infty$, we have,
\begin{equation}
\inf _{S \in \mathcal{S}_{\gamma}} \bar{\mathbb{E}}_{1}(S) \sim \frac{\log \gamma}{D\left(\mu_{1} \| \mu_{0}\right)}.
\end{equation}
The bound for this theorem is achieved by the CUSUM Test.
\end{theorem}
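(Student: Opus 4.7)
The plan is to prove matching asymptotic upper and lower bounds. Page's CUSUM test achieves the upper bound: with threshold $\log\gamma$ and increments $Z_i = \log(\mu_1(X_i)/\mu_0(X_i))$, I would verify $\mathbb{E}_0(M) \geq \gamma$ by a Kraft-type summation nearly identical to Lemma \ref{lemma:pe} (with the universal codeword length $L(\cdot)$ replaced by $-\log\mu_1(\cdot)$ and the slack $\lambda$ set to zero), and then $\bar{\mathbb{E}}_1(M) \sim \log\gamma / D(\mu_1\|\mu_0)$ by the strong law applied to the $Z_i$, which have mean $D(\mu_1\|\mu_0)$ under $\mu_1$: the CUSUM statistic grows at rate $D(\mu_1\|\mu_0)$ and takes about $\log\gamma/D(\mu_1\|\mu_0)$ post-change samples to cross the threshold. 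The essential supremum over pre-change histories in $\bar{\mathbb{E}}_1$ is neutralized by the reflection at zero built into the CUSUM recursion, which erases any pre-change memory.

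For the converse, I would fix any $S \in \mathcal{S}_\gamma$ and rely on a change-of-measure argument on the post-change segment. Let $S_m = (S-m+1)^+$ denote the detection delay when the change occurs at time $m$. The tilting identity $\mathbb{P}_0(A) = \mathbb{E}_1[\exp(-\sum_{i=m}^{S} Z_i)\,\mathbf{1}_A]$, together with the false-alarm budget $\mathbb{E}_0(S) > \gamma$ (which via Markov's inequality forces $\mathbb{P}_0(S > \gamma/2) > 1/2$), pins the accumulated post-change log-likelihood at order $\log\gamma$ before $S$ can stop. Combining this with Wald's identity $\mathbb{E}_1[\sum_{i=m}^{m+S_m-1} Z_i] = D(\mu_1\|\mu_0)\cdot \mathbb{E}_1(S_m)$ produces $\bar{\mathbb{E}}_1(S) \geq (1-o(1))\log\gamma/D(\mu_1\|\mu_0)$. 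Theorem \ref{theorem:lorden} is the convenient handle for the essential supremum over $Y_1,\dots,Y_{m-1}$, letting us pass from an arbitrary $S$ to a shifted minimum whose delay inherits the bound.

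The main obstacle is the converse, and within it pinning down the constant $1/D(\mu_1\|\mu_0)$ rather than a looser upper bound. Obtaining $\log\gamma$ in the numerator is nearly automatic from the false-alarm constraint via a single tilting step, but arriving at exactly $D(\mu_1\|\mu_0)$ in the denominator demands that Jensen's inequality be nearly tight when the tilt is unwound, which in turn requires a truncation: restrict attention to the event that the empirical mean of the post-change $Z_i$ is within $\epsilon$ of $D(\mu_1\|\mu_0)$, use the weak law to give this event probability $1-o(1)$ under $\mu_1$, and absorb the complement into the $o(1)$ of the $\sim$ relation. The remaining bookkeeping — sending $\gamma\to\infty$ before $\epsilon\to 0$, and checking that the $\bar{\mathbb{E}}_1$ supremum for Page's CUSUM is attained at $m=1$ so the two sides truly match — is routine.
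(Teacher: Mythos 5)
The paper does not actually prove this statement: it is Lorden's 1971 theorem, imported by citation, so there is no in-paper proof to match your sketch against. Judged on its own terms, your achievability half follows the standard route (and essentially Lorden's own): apply Theorem \ref{theorem:lorden} to the one-sided SPRT $N=\inf\{n:\sum_{i=1}^{n}Z_i\ge\log\gamma\}$ built from your increments $Z_i$, note that the shifted minimum $N^{*}$ is exactly Page's CUSUM, and get the delay from Wald/SLLN. One caveat: you cannot literally run the Lemma \ref{lemma:pe} computation with ``$\lambda$ set to zero'' --- that bound is $\alpha/(2^{\lambda}-1)$, which blows up at $\lambda=0$, because that proof bounds each first-crossing set $B_n$ separately via Kraft and then needs the geometric factor $2^{-n\lambda}$ for summability of the series over $n$. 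When $L(\cdot)$ is replaced by the exact $-\log\mu_1(\cdot)$ you must instead use that the first-crossing events are disjoint and $\sum_{n}\mu_1(B_n)\le 1$ (equivalently, Ville's inequality for the likelihood-ratio martingale), which yields $\mathbb{P}_0(N<\infty)\le 1/\gamma$ with no slack. Fixable, but not ``Lemma \ref{lemma:pe} with $\lambda=0$.''

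The genuine gap is in the converse. First, Markov's inequality runs the wrong way: from $\mathbb{E}_0(S)>\gamma$ you cannot conclude $\mathbb{P}_0(S>\gamma/2)>1/2$ (take $S=100\gamma$ with probability $1/100$ and $S=0$ otherwise; the mean exceeds $\gamma$ but the tail probability is $1/100$). Converting the mean-time-to-false-alarm constraint $\mathbb{E}_0(S)>\gamma$ into a pointwise barrier on the accumulated post-change log-likelihood is precisely the hard part of Lorden's lower bound, and it cannot be done in a single Markov step: Lorden extracts from an arbitrary $S\in\mathcal{S}_{\gamma}$ a one-sided test with small error probability and invokes the Wald-type lower bound on expected sample size of such tests, while Lai's later argument requires a careful choice of change point $m$ at which $\mathbb{P}_0(S\ge m\mid \mathcal{F}_{m-1})$ is not too small. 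Second, Theorem \ref{theorem:lorden} cannot be ``the handle for the essential supremum'' in the converse direction: it constructs a good $N^{*}$ from a given one-sided $N$ and bounds $\bar{\mathbb{E}}_1(N^{*})$ from \emph{above}; both of its inequalities point toward achievability and say nothing about an arbitrary competitor $S$. Your truncation-plus-Wald machinery for pinning the constant $D(\mu_1\|\mu_0)$ is the right instinct, but it only engages after the false-alarm constraint has been converted into a likelihood-ratio crossing requirement, and that conversion is the missing step.
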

\begin{theorem}
For Test 3 in Equation \eqref{eq:nalpha}, we have,
\begin{align}
    \mathbb{E}_0(M(\gamma)) \geq \frac{\gamma}{\frac{1}{2^{\lambda - \log(\beta)} - 1} + \epsilon_0 \gamma},
    \label{eq:slto}
\end{align}
and for $\gamma \rightarrow \infty$ and $\lambda < D\left(\mu_{1} \| \hat{\mu}_0\right)$, we have,
\begin{align}
    \bar{\mathbb{E}}_{1}(M(\gamma) \mid x_1^{n_0}) \sim \frac{|\log \gamma| }{D\left(\mu_{1} \| \hat{\mu}_0\right)-\lambda}.
    \label{eq:sltt}
\end{align}
\label{th:43}
\end{theorem}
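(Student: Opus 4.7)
The plan is to prove the two displayed inequalities separately, mirroring the approach used for Lemma 2 of \cite{jacob2008} with $\hat{\mu}_0$ in place of $\mu_0$ and with the error bound from the preceding lemma in place of the corresponding JB-Page estimate.

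For \eqref{eq:slto}, I would set $\alpha = 1/\gamma$ and observe that $M(\gamma)$ dominates the Lorden construction $N^{*} = \min_{k \geq 1}(N_k + k - 1)$ obtained from the auxiliary stopping time $N(\alpha)$, since any $k$ that achieves the $M(\gamma)$ threshold at time $n$ also satisfies $-L(y_k^n) - \log \hat{\mu}_0(y_k^n) - (n-k+1)\lambda \geq \log \gamma$ (because $n\lambda \geq (n-k+1)\lambda$), making $N_k \leq n - k + 1$. Theorem \ref{theorem:lorden} then gives $\mathbb{E}_0(M(\gamma)) \geq \mathbb{E}_0(N^{*}) \geq 1/\mu_0(N(\alpha) < \infty)$, and substituting the probability-of-error bound $\mu_0(N(\alpha) < \infty) \leq \alpha/(2^{\lambda - \log\beta} - 1) + \epsilon_0$ with $\alpha = 1/\gamma$ yields the stated inequality after elementary algebra.

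For \eqref{eq:sltt}, conditioning on $x_1^{n_0}$ freezes $\hat{\mu}_0$, so the post-change analysis reduces to the JB-Page setting with $\hat{\mu}_0$ playing the role of $\mu_0$. The upper bound then follows from the conditional Lorden inequality $\bar{\mathbb{E}}_1(N^{*} \mid x_1^{n_0}) \leq \mathbb{E}_1(N(\alpha) \mid x_1^{n_0})$ combined with Theorem \ref{th:34}, which gives $\mathbb{E}_1(N(\alpha) \mid x_1^{n_0}) \sim |\log \gamma|/(D(\mu_1 \| \hat{\mu}_0) - \lambda)$. For the matching lower bound, I would appeal to the strong law of large numbers: under $\mu_1$, by universality $L(y_1^n)/n \to H(\mu_1)$ and $-\log \hat{\mu}_0(y_1^n)/n \to H(\mu_1) + D(\mu_1 \| \hat{\mu}_0)$ almost surely, so the statistic grows at rate $D(\mu_1 \| \hat{\mu}_0) - \lambda$ per sample, forcing the hitting time of $\log \gamma$ to satisfy $M(\gamma) \geq |\log \gamma|/(D(\mu_1 \| \hat{\mu}_0) - \lambda)(1 - o(1))$ with $\mu_1$-probability tending to one, which propagates to the conditional mean after a uniform integrability check.

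The main obstacle, as in the original JB-Page work, is reconciling the direction of the Lorden comparison with the non-standard $-n\lambda$ term in $M(\gamma)$: while $N^{*} \leq M(\gamma)$ gives the lower bound on $\mathbb{E}_0$ cleanly, extracting a matching upper bound on $\bar{\mathbb{E}}_1(M(\gamma) \mid x_1^{n_0})$ requires showing that the extra delay $M(\gamma) - N^{*}$ contributes only $o(\log \gamma)$ to the worst-case conditional mean under $\mu_1$. Intuitively, once the auxiliary statistic has crossed $\log \gamma$ for some starting index $k$, the continuing positive drift $D(\mu_1 \| \hat{\mu}_0) - \lambda$ pushes the $M(\gamma)$ statistic across its threshold within a further $o(\log \gamma)$ samples, but making this rigorous will likely require a Wald-type overshoot estimate together with a uniform-redundancy bound on the universal code to control the supremum over $k$.
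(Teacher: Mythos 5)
Your proposal is correct and takes essentially the same route as the paper, whose own proof is a one-line appeal to Theorem \ref{theorem:lorden} together with Equation \eqref{eq:pe} for the first inequality and Theorem \ref{th:34} for the second. The extra care you take---verifying $N^{*} \leq M(\gamma)$ for the $\mathbb{E}_0$ bound, and noting that a matching SLLN-based lower bound is needed to upgrade Lorden's upper bound on $\bar{\mathbb{E}}_{1}$ to the stated asymptotic equivalence---fills in details the paper leaves implicit.
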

\begin{proof}
Equation \eqref{eq:slto} follows directly from Theorem \ref{theorem:lorden} and Equation \eqref{eq:pe} whereas Equation \eqref{eq:sltt} follows from Theorem \ref{theorem:lorden} and Theorem \ref{th:34}.
\end{proof}

We will now formally show that the proposed test is asymptotically optimal if the KL-Divergence between the post-change distribution and the empirical distribution is close enough to the KL-Divergence between the post-change distribution and the pre-change distribution. This is formally stated as follows.
\begin{theorem}
There exits a stopping time $M \in \mathcal{S}_{\gamma}$ such that, as $\gamma \rightarrow \infty$ we have,
\begin{equation}
\bar{\mathbb{E}}_{1}(M \mid x_1, x_2, \ldots) \sim(1+\kappa) \inf _{S \in \mathcal{S}_{\gamma}} \bar{\mathbb{E}}_{1}(S) \text{ a.s. } [\mu_0],
\end{equation}
where $\kappa$ can be chosen arbitrarily close to 0.
\end{theorem}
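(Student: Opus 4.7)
The plan is to take $M := M(c\gamma)$ from Test 3 with a small drift parameter $\lambda$, a slack constant $c > 1$, and a training length $n_0 = n_0(\gamma)$ that grows with $\gamma$. The key observation is that by the strong law of large numbers, $\hat{\mu}_0 \to \mu_0$ a.s.\ under $\mu_0$, so $D(\mu_1 \| \hat{\mu}_0) \to D(\mu_1 \| \mu_0)$ a.s., and the upward drift rate $D(\mu_1 \| \hat{\mu}_0) - \lambda$ of Test 3 approaches $D(\mu_1 \| \mu_0)$ to within a $(1+\kappa)^{-1}$ factor. Concretely, given $\kappa > 0$, I would first set $\lambda = \kappa\, D(\mu_1 \| \mu_0)/(1+\kappa)$ so that $D(\mu_1 \| \mu_0)/(D(\mu_1 \| \mu_0) - \lambda) = 1+\kappa$; then fix any constant $c > 1/(2^\lambda - 1) + 1$; finally pick $n_0(\gamma)$ growing with $\gamma$ (Hoeffding's inequality on empirical frequencies yields $\epsilon_0 \leq 2|\mathcal{X}|\,e^{-2 n_0 \delta^2}$, so $n_0(\gamma) \gtrsim \log\gamma$ suffices to force $\epsilon_0 \cdot c\gamma \to 0$), coupled with $\delta = \delta(\gamma) \to 0$ slowly enough that $\log\beta \to 0$.

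The verification has two parts. To see $M \in \mathcal{S}_\gamma$, Equation \eqref{eq:slto} applied at threshold $c\gamma$ yields
\begin{equation*}
\mathbb{E}_0(M(c\gamma)) \geq \frac{c\gamma}{1/(2^{\lambda-\log\beta}-1) + \epsilon_0 \cdot c\gamma},
\end{equation*}
whose denominator tends to $1/(2^\lambda - 1) < c$, so the right-hand side eventually exceeds $\gamma$. For the ratio, Equation \eqref{eq:sltt} gives
\begin{equation*}
\bar{\mathbb{E}}_1\bigl(M \mid x_1^{n_0}\bigr) \sim \frac{\log(c\gamma)}{D(\mu_1\|\hat{\mu}_0) - \lambda} \sim \frac{\log\gamma}{D(\mu_1\|\hat{\mu}_0) - \lambda};
\end{equation*}
dividing by the Lorden benchmark $\inf_S\bar{\mathbb{E}}_1(S) \sim \log\gamma/D(\mu_1\|\mu_0)$, the ratio converges a.s.\ to $D(\mu_1\|\mu_0)/(D(\mu_1\|\mu_0) - \lambda) = 1+\kappa$. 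Since $\bar{\mathbb{E}}_1(M \mid x_1, x_2, \ldots)$ depends on the full sequence only through $\hat{\mu}_0$, the a.s.\ convergence $\hat{\mu}_0 \to \mu_0$ transfers directly to the required a.s.\ statement.

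The main obstacle is the joint tuning of $n_0(\gamma)$ and $\delta$: on one hand $\epsilon_0 \cdot \gamma$ must vanish (otherwise the lower bound on $\mathbb{E}_0(M(c\gamma))$ cannot beat $\gamma$), while on the other $\delta$ must shrink so that $\log\beta \to 0$; the two are coupled through the concentration rate of the empirical distribution. A further subtlety is that Theorem \ref{th:43} is stated with $\hat{\mu}_0$ fixed and $\gamma \to \infty$, whereas here $\hat{\mu}_0$ also depends on $\gamma$ through $n_0(\gamma)$, so one needs a mild uniform-in-$\hat{\mu}_0$ version of the theorem valid in a shrinking neighbourhood of $\mu_0$; given the a.s.\ convergence, this causes no real difficulty. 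Everything else is a routine combination of Theorem \ref{th:43}, the Lorden benchmark, and the strong law.
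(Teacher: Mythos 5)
Your proposal is correct and follows essentially the same route as the paper: take $M$ to be Test 3 run at a threshold that is a constant multiple of $\gamma$, let $n_0\rightarrow\infty$ with $\gamma$ so that $\epsilon_0\gamma$ stays controlled and $\log\beta\rightarrow 0$, verify $M\in\mathcal{S}_{\gamma}$ via Equation \eqref{eq:slto}, and combine Equation \eqref{eq:sltt} with the Lorden benchmark and the a.s.\ convergence $D(\mu_1\|\hat{\mu}_0)\rightarrow D(\mu_1\|\mu_0)$. The only cosmetic difference is that the paper chooses the data-dependent $\lambda = D(\mu_1\|\hat{\mu}_0)-D(\mu_1\|\mu_0)/(1+\kappa)$ so the drift denominator equals $D(\mu_1\|\mu_0)/(1+\kappa)$ exactly, while you fix the deterministic $\lambda=\kappa D(\mu_1\|\mu_0)/(1+\kappa)$ and pass to the limit; you are in fact more explicit than the paper about the double-limit issue in applying Theorem \ref{th:43} when $\hat{\mu}_0$ varies with $\gamma$.
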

\begin{proof}
Let $\eta = \gamma \cdot (\frac{1}{2^{\lambda - \log(\beta)} - 1} + \epsilon_0 \gamma)$. Now, if $\epsilon_0 = \frac{1}{\gamma}$, $\eta$ can be written as $\eta = \gamma \cdot \left(\frac{1}{2^{\lambda - \log(\beta)} - 1} + 1\right)$.
Then by Equation \eqref{eq:pe} and Theorem \ref{theorem:lorden} we have,
\begin{equation*}
    \mathbb{E}_0(M(\eta)) \geq \gamma.
\end{equation*}
Now, we take $\lambda$ as (for a choice of  $\kappa>0$),
\begin{equation*}
    \lambda = D\left(\mu_{1} \| \hat{\mu}_0\right) - \frac{D\left(\mu_{1} \| \mu_0\right)}{1 + \kappa}\cdot
\end{equation*}
Also, note that as $\gamma \rightarrow \infty$, we have $\epsilon_0 \rightarrow 0$ and therefore $n_0 \rightarrow \infty$ which yields,
\begin{gather*}
    D\left(\mu_{1} \| \hat{\mu}_0\right) \xrightarrow[ ]{} D\left(\mu_{1} \| {\mu}_0\right) \text{   a.s. } [\mu_0] 
\end{gather*}
Also, as $n_0 \rightarrow \infty$, we have,
\begin{equation*}
    \bar{\mathbb{E}}_{1}(M(\eta) \mid x_1^{n_0}) \rightarrow \bar{\mathbb{E}}_{1}^{}(M(\eta) \mid x_1, x_2, \ldots) \text{ a.s. } [\mu_0].
\end{equation*}
And, by Theorem \ref{th:43}, we have,
\begin{align*}
    \bar{\mathbb{E}}_{1}(M \mid x_1^{n_0}) &\sim \frac{|\log \eta|} {D\left(\mu_{1} \| \hat{\mu}_0\right)-\lambda}
\end{align*} And thus,
\begin{align*}
    \bar{\mathbb{E}}_{1}(M \mid x_1, x_2, \ldots) &\sim (1 + \kappa) \cdot \frac{|\log \gamma|} {D\left(\mu_{1} \| \mu_0\right)} \text{ a.s. } [\mu_0]
\end{align*}
Therefore, as $n_0 \rightarrow \infty$, we have Equation (37).

Note that for the above analysis to work out (for finite $n_0$), we need $D\left(\mu_{1} \| \hat{\mu}_0\right) > \lambda > \max{\left(\log(\beta), D(\mu_0 \| \hat{\mu}_0)\right)}$, i.e.,
\begin{equation*}
    \frac{D\left(\mu_{1} \| \mu_0\right)}{1 + \kappa}  + \max{\left(\log(\beta), D(\mu_0 \| \hat{\mu}_0)\right)} < D\left(\mu_{1} \| \hat{\mu}_0\right).
\end{equation*}
Consider the case when the samples using which the empirical distribution is estimated $x_1^{n_0} \in E_0^{\delta}$, then we have,
\begin{align*}
    D(\mu_1 \| \hat{\mu}_0) - D(\mu_1 \| {\mu}_0) &= \sum_{a \in \mathcal{X}} \mu_1\left(a\right) \log \frac{\mu_0\left(a\right)}{\hat{\mu}_0\left(a\right)} \\
    &< \sum_{a \in \mathcal{X}} \mu_1\left(a\right) \log \frac{\mu_0\left(a\right)}{{\mu}_0\left(a\right) - \delta} \\
    &< \log(\beta).
\end{align*}
Therefore in this case, we have,
\begin{equation*}
    \frac{D\left(\mu_{1} \| \mu_0\right)}{1 + \kappa}  + \log(\beta) < D\left(\mu_{1} \| \hat{\mu}_0\right) < D\left(\mu_{1} \| \mu_0\right) + \log(\beta).
\end{equation*}

\end{proof}
\section{Conclusion and Future Work}
In this paper, we studied the modified JB-Page CUSUM Test for the case when the pre-change distribution is estimated using an empirical distribution and the post-change distribution is estimated using a universal compression code. We saw that as the number of samples producing $\hat{\mu}_0$ go to infinity, the test is asymptotically optimal. Currently, we are estimating the post-change distribution through a universal code and an empirical version of the pre-change distribution is used. A natural modification of this work can be to use universal compression codes to estimate the pre-change distribution as well and extend this work to markov processes.

\bibliographystyle{unsrt}
\bibliography{references}

\end{document}